%% file: main.tex
\RequirePackage[T1]{fontenc}
\documentclass[journal]{IEEEtran}
\usepackage{comment}
\usepackage{iftex}
\ifPDFTeX
\usepackage[utf8]{inputenc}
\fi
\usepackage{amsmath,amssymb,amsthm,bm}
\usepackage{booktabs,array}
\usepackage{cite}
\usepackage{tikz}
\usetikzlibrary{arrows.meta,positioning,calc}
\usepackage[hidelinks]{hyperref}
\newtheorem{theorem}{Theorem}
\newtheorem{proposition}{Proposition}

\newtheorem{corollary}{Corollary}
\theoremstyle{remark}

\newcommand{\R}{\mathbb R}
\newcommand{\Hh}{\mathcal H}
\newcommand{\ip}[2]{\langle #1,#2\rangle}
\newcommand{\norm}[1]{\lVert #1\rVert}
\newcommand{\Span}{\operatorname{span}}
\newcommand{\rank}{\operatorname{rank}}
\newcommand{\QD}{Q_{\!D}}
\newcommand{\QB}{Q_{\!B}}
\newcommand{\dd}{\mathrm d}
\newcommand{\jj}{\mathrm j}
\newcommand{\avg}[1]{\langle #1\rangle_T}
\begin{document}
\title{Reactive Power Beyond Sinusoids: Scalar Measurement Readouts and Information Limits}
\author{Kai Sun, ~\IEEEmembership{Fellow,~IEEE}
\thanks{K. Sun is with the Min H. Kao Department of Electrical Engineering and Computer Science, The University of Tennessee, Knoxville, TN 37996 USA (e-mail: kaisun@utk.edu).}}
\maketitle

\begin{abstract}
Reactive power is not uniquely defined for nonsinusoidal voltage and current waveforms because different scalar readouts can assign different values to the same measured voltage--current waveform pair. This paper characterizes which scalar readouts are selected by specific measurement and physical requirements, and quantifies the current information lost through scalarization. The scalar readout is first restricted by requiring bilinearity, alternation, and invariance to the measurement time origin. Requiring it to vanish for all smooth, strictly monotone, time-invariant memoryless resistors then forces the harmonic weights to be proportional to harmonic order. With consistency across harmonic cutoffs and fundamental-frequency calibration, this uniquely yields the classical derivative, or Iliovici-type, scalar within the stated class. The selected weighting is also incompatible with conventional reactive-power calibration independently at every harmonic. Furthermore, for known nonzero voltage, active power and a finite set of skew scalar readouts leave an affine family of compatible currents of exact dimension \(d-1-r_v\). These results establish explicit selection and information-loss limits for scalar reactive-power measurements under nonsinusoidal conditions.
\end{abstract}
\begin{IEEEkeywords}
Harmonic analysis, identifiability, nonsinusoidal waveforms, power measurement, power quality, reactive power.
\end{IEEEkeywords}

\section{Introduction}
\IEEEPARstart{R}{eactive} power is an established quantity in electrical power measurement, with roles in metering, equipment loading, compensation, and power-quality characterization. While its conventional definition is well established for single-frequency sinusoidal steady state, nonsinusoidal voltage--current measurements admit different scalar reactive-power readouts for the same waveform pair. Single-frequency complex power combines an average transfer rate with an oriented quadrature quantity. Its success can obscure a special geometric fact: the voltage and current lie in one real two-dimensional harmonic plane. After its orientation is fixed, that plane has only one independent signed area. Distorted periodic waveforms instead occupy several harmonic planes. Reducing their voltage--current relationship to one signed number requires a choice about which planes to weight and which relationships to ignore.

The starting information here is the measured periodic voltage and current, represented by complete Fourier coefficients or harmonic phasors in the declared finite waveform space. Sampled records are equivalent starting data when they suffice to determine those coefficients. A scalar power readout is a measurement functional applied to this information; retaining only selected readouts is a further information reduction. The paper studies the deterministic mathematical properties of this waveform-to-scalar reduction, assuming the input coefficients are available. It contributes to the theory and interpretation of electrical power measurements; acquisition hardware and stochastic measurement uncertainty are outside its scope.

Existing approaches to nonsinusoidal reactive or non-active power can be broadly divided into two complementary directions: scalar power definitions and current decomposition. Scalar approaches seek a single quantity that extends conventional reactive power. Budeanu’s definition sums the reactive powers of individual harmonics with equal weighting \cite{Czarnecki1987,Willems2011}, whereas Iliovici-type loop integrals and differentiated-power formulations introduce frequency-dependent weighting through time differentiation \cite{Jeltsema2014,Jeltsema2003,GarciaCanseco2005}. Current-decomposition approaches retain more information about the waveform. Fryze separates the current into active and non-active components \cite{CzarneckiSwietlicki1990,Willems2011}, while the currents’ physical components (CPC) framework \cite{Czarnecki2008} further decomposes the current according to different physical effects and compensation objectives.

These questions have also been studied from an instrumentation-and-measurement perspective. Prior work examined the interpretation of Budeanu reactive power and related current components \cite{Czarnecki1987,Willems2011}, while Czarnecki and Swietlicki \cite{CzarneckiSwietlicki1990} explicitly connected nonsinusoidal-power interpretation, analysis, and measurement. IEEE Std.~1459-2025 further provides standardized definitions for the measurement of electric-power quantities under sinusoidal and nonsinusoidal conditions \cite{IEEE1459_2025}. Building on this measurement literature and the CPC distinction between scalar power quantities and richer current descriptions, this paper addresses a complementary question: given measured voltage and current waveforms, what structural and physical requirements select a scalar reactive-power readout, and what current information is lost through scalarization? The objective is therefore to characterize the selection and information content of scalar measurement functionals, rather than to propose a replacement for standardized metering quantities. 

Vector, exterior, and geometric-algebra power theories make the underlying multidimensional relationship explicit. In particular, LaWhite and Ili'{c} introduced an antisymmetric exterior power matrix, its norm, conservation identities, and scalar projections \cite{LaWhiteIlic1997}. Their discussion, together with Willems's comparison of sinusoidal and nonsinusoidal nonactive-current spaces \cite{Willems2011}, already identifies the dimensional difficulty of representing such information by a single scalar. Later geometric formulations develop related decompositions and applications \cite{Menti2007,Montoya2020,Montoya2021,Cieslinski2024}. In this paper, the exterior-algebra representation is used only as an alternative organization of the voltage--current information. When the nonzero voltage waveform is known, complete real Fourier coefficients, complex harmonic phasors, and the compatible pair $(P,v\wedge i)$ contain equivalent information about the current waveform. Rewriting the same data in a different mathematical form therefore does not provide additional measurements or recover information that was not measured. The measurement chain considered here starts from measured voltage and current waveforms, represents them through waveform coefficients or phasors, and then reduces that information to scalar power readouts.

The first representation retains the measured waveform information, whereas the final scalarization may discard part of it. This paper focuses on this last mapping: how a scalar readout is selected and what information is retained or lost once the waveform measurements are reduced to scalars.

Two measurement questions follow. First, \emph{measurement-function selection}: which structural and physical requirements select a scalar reactive-power readout from measured waveforms? Common time-shift invariance removes dependence on the arbitrary measurement time origin, but within the real-bilinear, alternating class the harmonic weights remain free; memoryless-resistor nulling provides an additional physical constraint. Agreement on sinusoidal tests alone does not determine a readout for distorted waveforms. Second, \emph{information loss caused by scalarization}: for known nonzero voltage, what current information remains identifiable from active power and finite skew scalar readouts? For compatible data, Theorem~\ref{thm:rank} of the paper gives the remaining ambiguity dimension \(d-1-r_v\), where \(d\) is the waveform-space dimension and \(r_v\) the rank of the observed skew directions. Even the full common-shift-invariant alternating class leaves in-phase redistribution directions when multiple positive-frequency voltage planes are excited, so identical scalar power records can correspond to different RMS currents, harmonic conductance distributions, conductor-loss burdens, and compensation requirements. This characterizes measurement sufficiency for power metering and power-quality analysis within the stated waveform model, complementing standardized quantities such as IEEE Std.~1459-2025 \cite{IEEE1459_2025}.

The primary contribution of the paper is Theorem~\ref{thm:selection}, a converse characterization of scalar measurement readouts on contiguous finite harmonic spaces: within the real-bilinear, alternating, common-time-shift-invariant class, universal nulling for the stated smooth, strictly monotone, time-invariant memoryless resistors forces harmonic-order weighting $a_n=cn$. Fixed-reference compatibility and fundamental-frequency calibration then give $c=1$. The derivative/Iliovici scalar and its memoryless-resistor cancellation are established results. Prior studies of generalized content and differentiated power address nonlinear circuit balance and related reactive quantities \cite{SupertiFurgaPinola1994,SupertiFurga1994,Jeltsema2003,GarciaCanseco2005}, while the Lissajous formulation of Hong and de Le'on provides a physical circuit interpretation \cite{HongDeLeon2015}. Recent work \cite{KosobudzkiLadniak2026} further advocates derivative weighting as a unique additive extension for periodic waveforms and uses memoryless-resistor nulling as a selection criterion among derivative/integral candidates. In contrast, Theorem~\ref{thm:selection} starts from the full declared bilinear, alternating, common-time-shift-invariant class and derives $a_{m+n}=a_m+a_n$ from admissible nonlinear mixing, thereby forcing harmonic-order weighting up to calibration. Thus, no novelty or historical priority is claimed for the derivative/Iliovici formula or resistor nulling itself. Table~\ref{tab:prior} summarizes the relation to established approaches.

\begin{table}[t]
\caption{Established results and their role in this paper}
\label{tab:prior}
\centering\footnotesize
\renewcommand{\arraystretch}{1.2}
\setlength{\tabcolsep}{1.3pt}
\begin{tabular}{@{}>{\raggedright\arraybackslash}p{.28\columnwidth}
                    >{\raggedright\arraybackslash}p{.38\columnwidth}
                    >{\raggedright\arraybackslash}p{.33\columnwidth}@{}}
\toprule
Prior topic & Established result & Role here \\
\midrule

Iliovici/differentiated power \cite{Jeltsema2014}
& Derivative scalar; resistor cancellation
& Basis for the selection problem \\

Recent additive extension \cite{KosobudzkiLadniak2026}
& Derivative weighting selected among derivative/ integral candidates
& Converse over the full declared invariant bilinear class \\

Generalized content \cite{SupertiFurgaPinola1994,SupertiFurga1994}
& Nonlinear reactive quantities and network balance
& Scope-specific scalar characterization \\

Exterior/vector power \cite{LaWhiteIlic1997}
& Multidimensional power data and scalar projections
& Scalar selection and information limits \\

CPC/scattered current \cite{Czarnecki2008,Czarnecki2019}
& In-phase redistribution and compensation
& Kernel of the scalar observations \\

Geometric algebra \cite{Menti2007,Montoya2021,Cieslinski2024}
& Related representations and current decompositions
& Equivalent terminal information; retained directions \\

\bottomrule
\end{tabular}
\end{table}

Beyond the main scalar-selection theorem, the paper makes three additional contributions:
\begin{enumerate}
\item It characterizes all common-time-shift-invariant scalar readouts as weighted harmonic quadrature sums and shows that resistor nulling conflicts with equal calibration across all harmonic planes.
\item It quantifies scalarization loss: for known nonzero voltage, active power and finite skew scalar readouts leave a compatible-current family of dimension $d-1-r_v$, with uniqueness exactly when this dimension is zero.
\item It identifies the residual ambiguity of the full invariant scalar class: with $N$ excited positive-frequency harmonic planes and no DC, exactly $N-1$ in-phase redistribution directions remain unobserved. The examples show consequences for RMS current, loss burden, compensation, and reactive-power interpretation.
\end{enumerate}
Reconstruction, Gram identities, and rank--nullity are used only as tools; the contribution is their application to scalar reactive-power measurements and their information limits.

\section{Waveform Space and the Prior-Art Baseline}
\label{sec:framework}
\subsection{Conventions and harmonic planes}
Fix a reference $\omega_0>0$, period $T=2\pi/\omega_0$, passive terminal current direction, and define the real RMS inner product for periodic signals $x$ and $y$ (such as $v$ and $i$) by
\begin{equation}
 \ip{x}{y}=\frac1T\int_0^T x(t)y(t)\,\dd t,
 \qquad
\|x\|:=\sqrt{\langle x,x\rangle}.
\label{eq:inner}
\end{equation}
The RMS voltage and current are therefore $V=\|v\|,I=\|i\|$.
Unless otherwise stated, coefficients multiplying unnormalized trigonometric functions denote peak amplitudes in this paper.

The finite spaces for the selection theorem are
\begin{equation}
 \Hh_N^0=\bigoplus_{n=1}^N H_n,
 \qquad \Hh_N^+=\R1\oplus\Hh_N^0,
 \label{eq:spaces}
\end{equation}
where each $H_n=\Span\{e_{n,c},e_{n,s}\}$ is a two-dimensional real vector space of functions with $e_{n,c}:=\sqrt2\cos(n\omega_0t)$ and $e_{n,s}:=\sqrt2\sin(n\omega_0t)$. 
A superscript $0$ with $\Hh_N$ excludes DC; a superscript $+$ includes it. Simply write $\Hh_N$ when either choice is allowed, and let \(\Pi_N\) denote the orthogonal projection of the ambient waveform space, finite- or infinite-dimensional, onto the selected harmonic subspace \(\mathcal H_N\). The scalar channel assumption is important: repeated harmonic representations in a multichannel space admit additional invariant operators.

On each harmonic subspace $H_n$, write
\begin{equation}
v_n=v_{n,c}e_{n,c}+v_{n,s}e_{n,s}, \qquad
i_n=i_{n,c}e_{n,c}+i_{n,s}e_{n,s}.
\label{eq:vn_in}
\end{equation}
Define the linear quadrature operator $J_n:H_n\to H_n$ by $J_ne_{n,c}=e_{n,s},\qquad
J_ne_{n,s}=-e_{n,c}.$
Thus, $J_n^2=-I$ and, under the adopted sinusoidal convention,
$J_n$ represents a $-90^\circ$ phase shift on $H_n$.
For $D=\dd/\dd t$, we have $D|_{H_n}=-n\omega_0J_n.$

For the exterior product on $H_n$, define the oriented unit bivector
$E_n:=e_{n,c}\wedge e_{n,s}$, which spans the one-dimensional bivector
space $\Lambda^2 H_n=\operatorname{span}\{E_n\}$.
By bilinearity and antisymmetry, $e_{n,c}\wedge e_{n,c}
=e_{n,s}\wedge e_{n,s}=0,\quad
e_{n,s}\wedge e_{n,c}=-E_n.$
With the above orientation and quadrature convention, 
\[
\begin{aligned}
v_n\wedge i_n
&=
(v_{n,c}e_{n,c}+v_{n,s}e_{n,s})
\wedge
(i_{n,c}e_{n,c}+i_{n,s}e_{n,s}) \\
&=
(v_{n,c}i_{n,s}-v_{n,s}i_{n,c})E_n.
\end{aligned}
\]
Define the scalar coefficient
\begin{equation}
Q_n:=v_{n,c}i_{n,s}-v_{n,s}i_{n,c}.
\label{eq:Qn}
\end{equation}
Equivalently, using the operator $J_n$, we have $
Q_n=\langle J_nv_n,i_n\rangle.$

Because the waveform coordinates multiply
$\sqrt{2}\cos(n\omega_0t)$ and $\sqrt{2}\sin(n\omega_0t)$, respectively, in \eqref{eq:vn_in}, note that the conventional electrical-engineering phasor convention corresponds to a minus sign on the sine coordinate:
\[
\underline V_n=v_{n,c}-\jj v_{n,s},\qquad
\underline I_n=i_{n,c}-\jj i_{n,s}.
\]
The same scalar coefficient satisfies
\[
Q_n=\operatorname{Im}
\!\left(\underline V_n\underline I_n^\ast\right).
\]
Hence, a lagging inductive current corresponds to $Q_n>0$.

\subsection{Complete data before scalarization}
Let $H$ now be any finite real inner-product space of dimension $d$. Define
\begin{equation}
 P=\ip v i,\qquad B=v\wedge i.
 \label{eq:baseline}
\end{equation}
Here $v=(v_1,\ldots,v_d)^{\mathsf T}$ and $i=(i_1,\ldots,i_d)^{\mathsf T}$ denote coordinates in an orthonormal basis of $H$. For conventional matrix calculations, the bivector $B$ may equivalently be represented by a skew-symmetric matrix
\begin{equation}
K:=vi^{\mathsf T}-iv^{\mathsf T},
\label{eq:K}
\end{equation}
whose independent entries $K_{rs}=v_r i_s-i_r v_s
=-K_{sr}$ are $B$'s coordinates. Thus $B$ and $K$ contain the same geometric information. The induced norm on exterior space $\Lambda^2 H$ obeys
\begin{equation}
\norm{B}^2=\tfrac12\norm{K}_F^2=V^2I^2-P^2, \quad
\norm{B}=V\norm{i-\tfrac{P}{V^2}v},
\label{eq:gram}
\end{equation}
where the last identity assumes $v\ne0$ and $\norm K_F= \sqrt{\operatorname{tr}(K^{\mathsf T}K)}$
denotes the Frobenius norm. These are the Gram identity and the active-current projection, not new physical power laws \cite{LaWhiteIlic1997,Willems2011}. The full wedge generally contains cross-frequency coordinates, such as
$(v_{m,c}i_{n,s}-v_{n,s}i_{m,c})\,e_{m,c}\wedge e_{n,s}$ for $m\neq n$,
in addition to the diagonal coordinates $Q_nE_n$.

Let $\lrcorner$ denote the left contraction, defined by \(x\mathbin{\lrcorner}(a\wedge b)
=\langle x,a\rangle b-\langle x,b\rangle a,
\) and extended linearly to general bivectors. The following proposition establishes a reconstruction property of the compatible \((P,B)\) representation.

\begin{proposition}[Reconstruction and Compatibility]
\label{prop:reconstruction}
For fixed $v\ne0$, supplied $P\in\R$ and $B\in\Lambda^2H$ are compatible with a current if and only if $B\in v\wedge H$, equivalently $v\wedge B=0$. The unique current is
\begin{equation}
 i=\frac{Pv+v\mathbin{\lrcorner}B}{V^2}
   =\frac{Pv-Kv}{V^2}.
 \label{eq:reconstruct}
\end{equation}
\end{proposition}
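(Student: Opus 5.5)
The plan is to prove both directions of the compatibility criterion first, and only then the reconstruction formula, using nothing beyond the contraction rule $x\mathbin{\lrcorner}(a\wedge b)=\langle x,a\rangle b-\langle x,b\rangle a$ and the fact that $v\ne0$. \emph{Compatibility} here means that some $i\in H$ satisfies both $\ip v i=P$ and $v\wedge i=B$.

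\textbf{Necessity.} If such an $i$ exists, then $B=v\wedge i\in v\wedge H$ immediately, and $v\wedge B=v\wedge v\wedge i=0$.

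\textbf{Sufficiency.} Suppose $B\in v\wedge H$, so $B=v\wedge w$ for some $w\in H$. Write $w=\alpha v+w_\perp$ with $w_\perp\perp v$; then $B=v\wedge w_\perp$. Set
\[
i:=\frac{P}{V^2}v+w_\perp .
\]
This gives $\ip v i=P$ and $v\wedge i=v\wedge w_\perp=B$, as required.

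\textbf{Equivalence of the two conditions.} For $B\in v\wedge H$ we get $v\wedge B=0$ at once. For the converse, choose an orthonormal basis with $f_1=v/V$ and expand $B=\sum_{r<s}b_{rs}f_r\wedge f_s$. The product $v\wedge B$ equals $V$ times the sum of the terms with $1<r<s$ (each wedged with $f_1$). The trivectors $f_1\wedge f_r\wedge f_s$ are linearly independent, so $v\wedge B=0$ forces every coefficient with $r,s\ge2$ to vanish. Hence $B=f_1\wedge(\cdot)\in v\wedge H$.

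\textbf{Uniqueness.} If $i$ and $i'$ are both compatible, their difference $\delta$ satisfies $\ip v\delta=0$ and $v\wedge\delta=0$. The second condition makes $\delta$ parallel to $v$, since the wedge of two vectors vanishes iff they are linearly dependent. The first condition then gives $\delta=0$.

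\textbf{The formula.} Contracting gives
\[
v\mathbin{\lrcorner}(v\wedge i)=V^2 i-\ip v i\,v=V^2 i-Pv,
\]
and solving for $i$ yields $i=(Pv+v\mathbin{\lrcorner}B)/V^2$. For the matrix form, $Kv=(vi^{\mathsf T}-iv^{\mathsf T})v=Pv-V^2 i$, so $-Kv=V^2i-Pv=v\mathbin{\lrcorner}B$. Both expressions agree.

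\textbf{Main obstacle.} The only nonroutine point is the basis argument showing that $v\wedge B=0$ implies $B\in v\wedge H$. An alternative route is to verify directly that $i$ defined by the formula satisfies $v\wedge i=B$ whenever $v\wedge B=0$, using the contraction identity $v\mathbin{\lrcorner}(v\wedge B)=V^2B-v\wedge(v\mathbin{\lrcorner}B)$. This identity is the graded Leibniz rule applied to $v\wedge B$, and it must be checked on simple bivectors $a\wedge b$ and then extended linearly, which would require the paper's definition of contraction on trivectors. I would use the basis argument, which needs no such extension.
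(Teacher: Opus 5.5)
Your proof is correct and follows essentially the same route as the paper. The reconstruction formula comes from the same contraction $v\mathbin{\lrcorner}(v\wedge i)=V^2i-Pv$, and the converse uses the same orthonormal basis with first vector $v/V$ to show that the component of $B$ in $\Lambda^2(v^\perp)$ must vanish. The only differences are organizational: you construct $i$ directly from $B=v\wedge w_\perp$ and prove uniqueness by a separate kernel argument, whereas the paper substitutes into the reconstruction formula and reads uniqueness off from it.
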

\begin{proof}
For the forward direction (\emph{only if}), suppose a current $i$ exists
with $P=\langle v,i\rangle$ and $B=v\wedge i$. Then
\(v\wedge B=v\wedge(v\wedge i)=0,\)
so $B$ satisfies the compatibility condition. Moreover,\(
v\mathbin{\lrcorner}B
=v\mathbin{\lrcorner}(v\wedge i)
=V^2i-Pv=-Kv,\)
which gives \eqref{eq:reconstruct} and establishes uniqueness.
For the converse (\emph{if}), choose the unit vector $e_1=v/V$
pointing along $v$ and decompose
\(
B=e_1\wedge b+C,
\)
with $b\in e_1^\perp$ and $C\in\Lambda^2(e_1^\perp)$.
For example, in three dimensions, if
$e_1^\perp=\operatorname{span}\{e_2,e_3\}$, then
$b=b_2e_2+b_3e_3$ and
$C=c\,e_2\wedge e_3$.
Since
\(
v\wedge B=V e_1\wedge C,
\)
the condition $v\wedge B=0$ implies $C=0$.
Substitution in \eqref{eq:reconstruct} then gives
$P=\langle v,i\rangle$ and $B=v\wedge i$.
\end{proof}

This elementary result establishes the comparison baseline: \emph{full compatible $(P,B)$ and known nonzero $v$ preserve $i$}. It does not identify a circuit realization or its energy. An arbitrary ambient bivector is not a compatible perturbation at a fixed voltage. In particular, if $\dim H=d$, counting all $\binom{d}{2}$ exterior
coordinates would overstate the independent current information, since
for fixed $v\ne0$ the compatible bivectors form the $(d-1)$-dimensional
space $v\wedge H$, corresponding to the $d-1$ current components
orthogonal to $v$.

\subsection{Why a signed scalar needs structure}
Every real-bilinear alternating scalar form $q_A$ on finite-dimensional $H$ admits a unique representation by a skew-adjoint operator $A$:
\begin{equation}
 q_A(v,i)=\ip{Av}{i}=-\ip{v}{Ai},\qquad A^*=-A.
 \label{eq:skew}
\end{equation}
Here \(A^\ast\) denotes the adjoint of \(A\) (in a real orthonormal coordinate basis, $A^*=A^{\mathsf T}$). 

Equivalently, \(q_A\) is a linear scalar readout of \(B=v\wedge i\), obtained by taking its inner product with the bivector associated with the skew-adjoint operator \(A\). This is a finite-dimensional representation result: the skew-adjoint operator \(A\) specifies the readout rule and should not be interpreted as the load admittance, even though it acts on \(v\). For example, consider the bivector
\(
B
=B_{12}\,e_1\wedge e_2
+B_{13}\,e_1\wedge e_3
+B_{23}\,e_2\wedge e_3
\) in a three-dimensional space.
A particular skew-adjoint operator $A$ may correspond to the bivector
direction
\(
c_{12}\,e_1\wedge e_2
+c_{13}\,e_1\wedge e_3
+c_{23}\,e_2\wedge e_3,
\)
so that the resulting scalar readout is
\(
q_A(v,i)
=c_{12}B_{12}+c_{13}B_{13}+c_{23}B_{23}.
\)
Different choices of $A$ therefore produce different scalar projections of the same $B$, each retaining only part of its geometric
information.

The following result explains why a nonzero scalar describing the antisymmetric part of the voltage--current relationship cannot be defined in dimensions three or higher without introducing additional structure.

\begin{proposition}[Structure-free scalar obstruction]
\label{prop:nogo}
If $d\ge 3$, an alternating bilinear scalar invariant under all
orientation-preserving orthogonal transformations of $H$ is identically zero.
\end{proposition}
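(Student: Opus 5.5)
By \eqref{eq:skew}, every alternating bilinear scalar can be written as $q_A(v,i)=\ip{Av}{i}$ for a unique skew-adjoint operator $A$ on $H$. The plan is to turn invariance of $q_A$ into a commutation condition on $A$, and then show that for $d\ge3$ this condition forces $A=0$.

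\textbf{Step 1: reduce to commutation.} Let $R\in SO(H)$. Invariance means $q_A(Rv,Ri)=q_A(v,i)$ for all $v,i\in H$. The left side equals $\ip{ARv}{Ri}=\ip{R^{*}ARv}{i}$, and $R^{*}AR$ is again skew-adjoint. Uniqueness of the representing operator then gives $R^{*}AR=A$, that is, $RA=AR$ for every $R\in SO(H)$.

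\textbf{Step 2: kill $A$ with a basis rotation.} Fix an orthonormal basis $e_1,\dots,e_d$. Pick distinct indices $r\ne s$ and, since $d\ge3$, a third index $t\notin\{r,s\}$. Define $R$ by $Re_r=-e_r$, $Re_t=-e_t$, and $Re_k=e_k$ for all other $k$; this $R$ is orthogonal with determinant $+1$. Then
\[
\ip{Ae_r}{e_s}=\ip{ARe_r}{Re_s}=\ip{A(-e_r)}{e_s}=-\ip{Ae_r}{e_s},
\]
so $\ip{Ae_r}{e_s}=0$. Note that $Re_s=e_s$ because $s\ne r,t$. The diagonal entries vanish already by skew-adjointness. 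Hence every matrix entry of $A$ is zero, so $A=0$ and $q_A\equiv0$.

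\textbf{Where $d\ge3$ is needed.} The only delicate point is building an orientation-preserving $R$ that flips $e_r$ but fixes $e_s$. A single reflection would do this but has determinant $-1$. Compensating with a second sign flip on a third basis vector restores determinant $+1$, and this is exactly where $d\ge3$ enters. For $d=2$ no such third index exists, and $J$ (rotation by $90^\circ$) commutes with all of $SO(2)$, which is the single-frequency $Q$ and explains why the hypothesis is sharp.

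An alternative route is representation-theoretic: $\Lambda^2H\cong\mathfrak{so}(d)$ has no nonzero $SO(d)$-fixed vector when $d\ge3$, because the center of $\mathfrak{so}(d)$ is trivial there. The elementary rotation argument above is preferable because it needs no Lie theory.
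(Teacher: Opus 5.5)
Your proof is correct and is essentially the paper's argument. Both build an orientation-preserving orthogonal map that flips $e_r$ and an auxiliary third vector $e_t$ while fixing $e_s$, so invariance forces $q(e_r,e_s)=-q(e_r,e_s)$; the paper phrases this as a contradiction starting from a pair with $q(e_1,e_2)\neq 0$, whereas you apply it entrywise to the matrix of $A$, and your Step 1 commutation reduction is never actually used, since Step 2 invokes invariance directly.
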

\begin{proof}
Suppose, to the contrary, that such a scalar $q$ exists and is not
identically zero. Then there exist two orthonormal vectors
$e_1,e_2\in H$ such that $q(e_1,e_2)\neq 0$.
Since $d\ge 3$, choose a third unit vector $e_3$ orthogonal to both.
Define an orthogonal transformation $R$ by
$R e_1=-e_1$, $R e_2=e_2$, $R e_3=-e_3$, with all remaining
orthogonal directions unchanged. Because two directions are reversed,
$\det R=1$, so $R$ is orientation preserving. By the assumed invariance of
$q$,
\(
q(e_1,e_2)=q(Re_1,Re_2).
\)
Using bilinearity,
\(
q(Re_1,Re_2)=q(-e_1,e_2)=-q(e_1,e_2).
\)
Therefore,
\(
q(e_1,e_2)=-q(e_1,e_2),
\)
which implies $q(e_1,e_2)=0$. This contradicts the assumption that
$q(e_1,e_2)\neq 0$. Hence no nonzero scalar with the stated properties
exists, and therefore $q_A\equiv 0$.
\end{proof}

On an oriented two-dimensional plane, the area scalar is rotation-invariant and determines the wedge, while reflection reverses its sign. In higher dimensions, any nonzero scalar requires additional structure. Proposition~\ref{prop:nogo} therefore serves only as a structure-free baseline: it does not exclude norms, model-specific quantities, or temporally selected scalars, nor does it treat arbitrary rotations as physical circuit operations. Under coordinate changes, a specified physical readout is transported with its operator. The physically relevant selection result below instead uses common time shifts.

\section{Temporal Symmetry and Admissible Scalar Readouts}
\label{sec:shifts}
Common time shifts act as $U_\tau x(t)=x(t+\tau)$ for all real $\tau$, modulo $T$. On each harmonic plane, this operation is a rotation at the corresponding integer multiple of the reference angular frequency. The same shift is applied to voltage and current, representing a change in the measurement time origin. Time-shift invariance does not select a unique reactive-power scalar, but it restricts all admissible alternating bilinear scalars to weighted sums of the conventional harmonic reactive powers $Q_n$, as stated in the following proposition.

\begin{proposition}[Time-shift classification]
\label{prop:shifts}
On either space \eqref{eq:spaces}, an alternating bilinear scalar is invariant under every common time shift if and only if
\begin{equation}
 A|_{H_n}=a_nJ_n,\qquad
 q_A(v,i)=\sum_{n=1}^N a_nQ_n,
 \quad a_n\in\R.
 \label{eq:weights}
\end{equation}
If DC is included and $1(t)$ denotes the constant waveform, then the operator $A$ maps the DC component to zero, i.e. $A1(t)=0$, and
no DC--harmonic coupling terms occur.
\end{proposition}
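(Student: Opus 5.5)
The plan is to translate invariance of $q_A$ into a commutation condition on the skew-adjoint operator $A$ from \eqref{eq:skew}, and then decompose $A$ into blocks indexed by the harmonic planes. Each common shift $U_\tau$ is orthogonal with respect to \eqref{eq:inner}, so $U_\tau^*=U_\tau^{-1}$. Invariance $\ip{AU_\tau v}{U_\tau i}=\ip{Av}{i}$ for all $v,i$ is therefore equivalent to $U_\tau^{-1}AU_\tau=A$, that is, $AU_\tau=U_\tau A$ for every real $\tau$. On $H_n$ the shift acts as the rotation $U_\tau|_{H_n}=\cos(n\omega_0\tau)I+\sin(n\omega_0\tau)(-J_n)$ (up to the sign convention, which follows from $D|_{H_n}=-n\omega_0J_n$ and $U_\tau=e^{\tau D}$). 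On the DC line it acts as the identity. The converse direction is immediate. If $A|_{H_n}=a_nJ_n$ and $A1=0$, then $A$ is skew-adjoint, because $J_n^*=-J_n$ on $H_n$, and it commutes with every $U_\tau$, because $J_n$ commutes with rotations of $H_n$. Moreover $\ip{Av}{i}=\sum_n a_n\ip{J_nv_n}{i_n}=\sum_n a_nQ_n$ by \eqref{eq:Qn}.

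For the forward direction, write $A$ in block form $A_{mn}:H_n\to H_m$, where the index $0$ denotes the DC line. Commutation gives $A_{mn}R_n(\tau)=R_m(\tau)A_{mn}$ for all $\tau$, with $R_n(\tau)$ the rotation by angle $n\omega_0\tau$. The cleanest route is to differentiate at $\tau=0$, which gives the infinitesimal relation $n\,A_{mn}J_n=m\,J_mA_{mn}$; here $J_0:=0$ on the DC line. I would then complexify, or equivalently use eigenvalues. Since $J_n$ has eigenvalues $\pm\jj$ on $H_n\otimes\mathbb C$, the operator $A_{mn}$ maps the $\pm\jj n\omega_0$ eigenspace of the generator $D$ into the $\pm\jj n\omega_0$ eigenspace in $H_m$. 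For $m\ne n$ with $m,n\ge0$, these eigenvalue sets $\{\pm\jj n\}$ and $\{\pm\jj m\}$ are disjoint. This forces $A_{mn}=0$, which covers the DC--harmonic blocks as well: $m=0,n\ge1$ gives $A_{0n}J_n=0$, and since $J_n$ is invertible, $A_{0n}=0$. A real alternative for $m\ne n$ is to pick $\tau$ with $R_n(\tau)=I$ but $R_m(\tau)\ne I$, or vice versa, and then iterate; the derivative argument is uniform, however, and I would use it.

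For the diagonal blocks $m=n\ge1$, the relation says that $A_{nn}$ commutes with $J_n$. The real $2\times2$ matrices commuting with $J_n$ are exactly those of the form $\alpha I+\beta J_n$. Skew-adjointness of $A$ forces $A_{nn}$ to be skew, which kills $\alpha$ and leaves $A_{nn}=a_nJ_n$. The DC block $A_{00}$ is a $1\times1$ skew matrix and hence zero, so $A1=0$. Together with the vanishing of the off-diagonal blocks, this also gives that no DC--harmonic coupling terms occur.

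I expect the main obstacle to be the off-diagonal vanishing. The difficulty is bookkeeping rather than depth: one must handle the real two-dimensional planes carefully without complexifying sloppily, and must also treat the one-dimensional DC block. I would first try the infinitesimal intertwining relation. Applying it twice, and using $J^2=-I$, gives $n^2A_{mn}=m^2A_{mn}$, so $A_{mn}=0$ whenever $m\ne n$, $m,n\ge0$. This avoids complexification entirely and is the route I would commit to.
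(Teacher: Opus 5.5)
Your proposal is correct and follows the paper's proof. Both arguments reduce invariance to $AU_\tau=U_\tau A$, eliminate cross-harmonic and DC--harmonic blocks because the planes rotate at different rates, write each diagonal block as $\alpha I+\beta J_n$, and then use skew-adjointness to remove $\alpha$ and the $1\times1$ DC block. Your infinitesimal relation $nA_{mn}J_n=mJ_mA_{mn}$, iterated to $n^2A_{mn}=m^2A_{mn}$, makes explicit the off-diagonal vanishing that the paper only asserts. For the DC blocks, where $J_0=0$ rather than $J_0^2=-I$, the same relation gives $A_{0n}=0$ and $A_{m0}=0$ directly from the invertibility of $J_n$, as you note.
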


\begin{proof}
Let $q_A(v,i)=\langle Av,i\rangle$, where $A$ is skew-adjoint. Time-shift invariance requires
$q_A(U_\tau v,U_\tau i)=q_A(v,i)$ for every $\tau$. Since $U_\tau$ is orthogonal, this is equivalent to $AU_\tau=U_\tau A$.

On each harmonic plane $H_n$, $U_\tau=\exp(-n\omega_0\tau J_n)$, a rotation of the coefficient vector by $-n\omega_0\tau$. Because different harmonic planes rotate at different rates, any operator commuting with every $U_\tau$ must preserve each $H_n$ and cannot contain cross-harmonic couplings. On each $H_n$, a real linear operator commuting with all rotations has the form
$A|_{H_n}=g_nI+a_nJ_n$. Since $A$ is skew-adjoint while $I$ is self-adjoint, we must have $g_n=0$, and hence $A|_{H_n}=a_nJ_n$. Therefore
$q_A(v,i)=\sum_{n=1}^N a_n\langle J_nv_n,i_n\rangle=\sum_{n=1}^N a_nQ_n$.

If DC is included, a skew-adjoint operator on this one-dimensional DC subspace is zero, so $A(1)=0$, and no DC--harmonic couplings occur. Conversely, operators with blocks $a_nJ_n$ and no cross-couplings commute with every common time shift, proving the result.
\end{proof}

\emph{Remark:} This proposition is important because it reduces the search for a time-shift-invariant scalar reactive-power measure to the choice of harmonic weights $a_n$. In other words, once bilinearity, alternation, and common time-shift invariance are imposed, the remaining freedom is entirely in how the individual harmonic reactive powers $Q_n$ are weighted.

Several familiar choices fit \eqref{eq:weights}. Equal weights give $\QB=\sum_{n=1}^N Q_n$, derivative weights give $\QD=\sum_{n=1}^N nQ_n$, and the inverse-frequency readout $e_\Delta:=\sum_{n=1}^N Q_n/(2n\omega_0)$ has energy units. Here $\omega_0$ is a fixed reference angular frequency and $N$ is the retained positive-harmonic cutoff. Table~\ref{tab:selectors} summarizes these choices without identifying any of them with generic stored energy. The analysis assumes this fixed reference and finite harmonic bandwidth; redefining $\omega_0$ for individual harmonics changes the functional being compared, while frequency-estimation error, spectral leakage, and measurement noise are outside the present deterministic scope.

No linearity of $i$ as a function of $v$ is assumed. Cross-frequency wedge coordinates may be nonzero in either linear or nonlinear circuits, but they cannot contribute to a scalar within this invariant bilinear class. They are excluded by the distinct temporal behavior of different harmonics, not because they are physically unimportant. Scalars using an external phase reference or a time-varying measurement operator belong to a different observation class.

\begin{table}[t]
\caption{Selected readouts on a fixed harmonic grid}
\label{tab:selectors}
\centering\footnotesize
\renewcommand{\arraystretch}{1}
\setlength{\tabcolsep}{5pt}
\begin{tabular}{@{}>{\raggedright\arraybackslash}p{.1\columnwidth}>{\raggedright\arraybackslash}p{.13\columnwidth}>{\raggedright\arraybackslash}p{.7\columnwidth}@{}}
\toprule
Readout & Weight(s) & Structure and interpretation \\
\midrule
$\QB$
& $1$
& Equal harmonic weighting; preserves single-harmonic calibration. No universal nonlinear-resistor null. \\

$\QD$
& $n$
& Harmonic-order weighting; vanishes for the stated time-invariant memoryless resistors. Not a general stored-energy measure. \\

$e_\Delta$
& $\frac{1}{2n\omega_0}$
& Energy units; represents storage imbalance under the stated LTI assumptions. No universal resistor null. \\

All $Q_n$
& Separate terms
& Retains each harmonic quadrature term, but misses $N-1$ in-phase directions when all $N$ voltage planes are excited (no DC). \\
\bottomrule
\end{tabular}
\end{table}

\section{Selection by Nonlinear-Resistor Nulling}
\label{sec:selection}
The time-shift classification reduces scalar selection to the harmonic weights $a_n$ but does not determine them. To constrain these weights, this section uses time-invariant memoryless nonlinear resistors, which can generate harmonics and nontrivial voltage--current geometry without electrical storage. Requiring a storage- or circulation-oriented scalar to vanish on such devices is used here only as a selection criterion; it does not imply absence of nonactive current, harmonic redistribution, or nonscalar voltage--current geometry. Theorem~\ref{thm:selection} therefore asks the narrower converse question: \emph{under bilinearity, alternation, common time-shift invariance, and universal resistor nulling, which scalar weighting is possible?}

A storage-based physical viewpoint has also been advocated for nonlinear circuits, including the Lissajous-based formulation of Hong and de Le\'on \cite{HongDeLeon2015}. The present work does not adopt a particular equivalent-circuit reconstruction; instead, it asks what a memoryless-resistor nulling requirement implies for a declared class of terminal bilinear scalar readouts.

\subsection{The precise nulling axiom}
A nonlinear constitutive law does not preserve a finite harmonic space. Even when instantaneous terminal voltage $v\in\mathcal H_N$, the resulting current $i=f(v)$ may contain DC and frequencies above the voltage cutoff. The finite nulling axiom must therefore be stated as
\begin{equation}
 \ip{A_Nv}{\Pi_N f(v)}=0
 \quad\text{for every }v\in\Hh_N\text{ and }f\in\mathcal F,
 \label{eq:null}
\end{equation}
where $\mathcal F$ consists of smooth, globally strictly increasing, time-invariant, single-valued real functions with $f(0)=0$. $vf(v)>0$ for $v\ne0$, implying strictly positive instantaneous dissipation. Time invariance and single-valuedness exclude explicit time modulation, hysteresis, and internal memory. Projection in \eqref{eq:null} is only an observation operation. Since $A_Nv\in\Hh_N$, its pairing with $\Pi_Nf(v)$ equals its pairing with the complete current:
\begin{equation}
\langle A_Nv,\Pi_N f(v)\rangle
=
\langle A_Nv,f(v)\rangle.
\label{eq:projection_identity}
\end{equation}
Thus, $\Pi_N$ represents only the finite-dimensional observation used by the scalar readout; it does not replace the physical constitutive law $i=f(v)$ by a truncated one.

\begin{theorem}[Restricted derivative selection]
\label{thm:selection}
Let $N\ge1$ and let $q_{A_N}$ on $\Hh_N^0$ or $\Hh_N^+$ be real-bilinear, alternating, and invariant under all common shifts. It satisfies \eqref{eq:null} if and only if there exists a real constant $c_N$
such that
\begin{equation}
 a_n=c_Nn\ (1\le n\le N),\qquad
 A_N=-\frac{c_N}{\omega_0}D|_{\Hh_N}.
 \label{eq:selected}
\end{equation}
If the forms agree under the standard inclusions $\Hh_N\subset\Hh_{N+1}$ with fixed DC convention and reference, then $c_N=c$. Fundamental calibration $q_{A_N}(e_{1,c},e_{1,s})=1$ fixes $c=1$.
\end{theorem}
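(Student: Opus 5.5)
By Proposition~\ref{prop:shifts}, we may write $q_{A_N}(v,i)=\sum_{n=1}^N a_nQ_n$ with $A_N|_{H_n}=a_nJ_n$. The sufficiency direction is the classical resistor-cancellation argument. Because $D|_{H_n}=-n\omega_0J_n$, the choice $a_n=c_Nn$ gives $A_N=-(c_N/\omega_0)D|_{\Hh_N}$. Using \eqref{eq:projection_identity},
\[
\langle A_Nv,f(v)\rangle=-\frac{c_N}{\omega_0}\cdot\frac1T\int_0^T \dot v\,f(v)\,\dd t=-\frac{c_N}{\omega_0}\cdot\frac1T\big[F(v(t))\big]_0^T=0,
\]
where $F$ is an antiderivative of $f$. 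The bracket vanishes because $v$ is periodic and smooth. So the whole weight of the proof lies in the converse.

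For necessity, I would extract conditions on the $a_n$ from the nulling identity \eqref{eq:null} by perturbation. First I would reduce to a polynomial identity. Take $f(x)=x+\varepsilon g(x)$ with $g$ a polynomial; for small $\varepsilon$ this is not globally strictly increasing, so I would use $f(x)=x+\varepsilon x^2/(1+x^2)$ instead, or work on bounded amplitudes. The linear part contributes $\langle A_Nv,v\rangle=0$ automatically by skew-adjointness. Differentiating in $\varepsilon$ then yields $\langle A_Nv,g(v)\rangle=0$. Near $v=0$ I would scale $v\mapsto sv$ and expand in $s$; the leading term should give $\langle A_Nv,v^2\rangle=0$ for all $v\in\Hh_N$. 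The cleanest route is to state admissible $f$ such as $x+\varepsilon\tanh(x)^2$ and compare Taylor coefficients in $s$. So the target identity is
\[
\langle A_Nv,v^2\rangle=\frac1T\int_0^T (A_Nv)\,v^2\,\dd t=0\quad\forall v\in\Hh_N .
\]

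Next I would polarize this cubic form and test it on three-harmonic triples. Take $v=\alpha e_{m,c}+\beta e_{n,s}+\gamma w$, where $w\in H_{m+n}$ and $m+n\le N$, and extract the coefficient of $\alpha\beta\gamma$. The products $v^2$ produce $H_{m+n}$ components, and only frequency-matched triples survive the time average. Writing the form in complex notation, $\langle A_Nv,v^2\rangle$ becomes a real part of sums $\sum_{k=l+p}(\text{weight combination})\,V_lV_p\bar V_k$. Here $A_N$ multiplies $V_k$ by $\mathrm j a_k$ up to sign, so each triple $k=l+p$ carries the coefficient $a_l+a_p-a_k$ up to a fixed factor. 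Choosing the phases so that this term is nonzero forces $a_{m+n}=a_m+a_n$ for all $m,n\ge1$ with $m+n\le N$. Induction from $a_1$ then gives $a_n=na_1$, so $c_N=a_1$.

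I expect the main obstacle to be this triple computation: the phase bookkeeping must be done carefully. When $m=n$ the perturbation also contributes $2a_n$ versus $a_{2n}$, so the diagonal case must be checked separately. The DC component is handled by Proposition~\ref{prop:shifts}, since $A1=0$ and DC in $v$ only adds terms that vanish after averaging, for example $\langle A_Nv,2v_0v\rangle=0$.

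Finally, consistency under the inclusions means $q_{A_{N+1}}|_{\Hh_N}=q_{A_N}$, which gives $c_{N+1}=c_N$ (the two $a_1$ agree), so $c_N=c$. The calibration condition $q(e_{1,c},e_{1,s})=a_1Q_1=a_1\cdot 1$ then forces $c=1$.
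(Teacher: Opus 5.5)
Your proposal is correct and follows the paper's proof in all essentials. The shared steps are: the reduction via Proposition~\ref{prop:shifts}, the quadratic identity $\langle A_Nv,v^2\rangle=0$, and polarization tested on harmonics $m$, $n$, $m+n$ to obtain $a_{m+n}=a_m+a_n$. The case $m=n$ needs no separate treatment, because the trilinear identity holds for $x,y$ in the same plane and directly gives $a_{2m}=2a_m$. The proof then closes with the primitive argument for sufficiency and with cutoff consistency plus calibration, exactly as you describe.

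The only real difference is how you reach the quadratic identity. You fix one admissible law such as $x+\varepsilon\tanh^2x$ and extract the $s^3$ coefficient under $v\mapsto sv$; this is the small-amplitude route the paper mentions in its scope remarks, and it shows that a single non-odd law already suffices. The paper instead builds a waveform-dependent law $Gx+\epsilon\chi(x)x^2$ that equals $Gv+\epsilon v^2$ exactly on the trajectory range, so it needs no limiting argument.
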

\begin{proof}
By Proposition~\ref{prop:shifts}, common time-shift invariance already implies $A_N|_{H_n}=a_nJ_n, \quad
q_{A_N}(v,i)=\sum_{n=1}^N a_nQ_n.$ 
It therefore remains to determine which harmonic weights $a_n$ are compatible
with the resistor-nulling condition \eqref{eq:null}.

First, it will be shown that this condition imposes a quadratic mixing identity. For any
fixed $v\in\Hh_N$, choose $M>\norm v_\infty$ and a smooth compactly supported
function $\chi$ that equals one on $[-M,M]$. Define
\[
h(x)=\chi(x)x^2,\qquad
f(x)=Gx+\epsilon h(x),
\]
where $\epsilon\neq0$ introduces the quadratic mixing term, $\chi$ leaves it
unchanged over the voltage range of interest while cutting it off outside a
bounded interval, and $G$ ensures global monotonicity:
\begin{equation}
G>|\epsilon|\sup_x|h'(x)|.
\label{eq:passive-extension}
\end{equation}
Then $f'(x)>0$ for all $x$, so $f$ belongs to the admissible resistor class,
while over the range actually reached by $v(t)$,
\[
f(v)=Gv+\epsilon v^2.
\]
Such a test law may be chosen separately for each fixed waveform $v$, since
\eqref{eq:null} is required to hold for every admissible pair $(v,f)$.

Applying \eqref{eq:null}, using the projection identity \eqref{eq:projection_identity}, gives
\[
0=\ip{A_Nv}{f(v)}
  =G\ip{A_Nv}{v}
   +\epsilon\ip{A_Nv}{v^2}.
\]
Because $A_N$ is skew-adjoint, $\ip{A_Nv}{v}=0$. Since $\epsilon\ne0$,
\begin{equation}
\ip{A_Nv}{v^2}=0
\qquad\text{for every }v\in\Hh_N.
\label{eq:quadratic}
\end{equation}

Next, use \eqref{eq:quadratic} to relate the weights of different harmonics.
Substitute $v=rx+sy+tz$ into \eqref{eq:quadratic} and collect the coefficient
of the product $rst$. After removing a common factor, this gives
\begin{equation}
\ip{A_Nx}{yz}
+\ip{A_Ny}{xz}
+\ip{A_Nz}{xy}=0.
\label{eq:polarization}
\end{equation}
Now let $x=\cos m\theta$, $y=\cos n\theta$, $z=\sin(m+n)\theta$, $\theta=\omega_0t$,
with $m,n\ge1$ and $m+n\le N$. Since
$A_N|_{H_k}=a_kJ_k$, 
\[
\ip{A_Nx}{yz}
=
a_m\ip{\sin m\theta}{\cos n\theta\,\sin(m+n)\theta}.
\]
Using elementary product-to-sum identities and orthogonality of distinct harmonics gives
\[
\ip{A_Nx}{yz}=\frac{a_m}{4},
\]
and the other two terms follow analogously:
\[
\ip{A_Ny}{xz}=\frac{a_n}{4},
\qquad
\ip{A_Nz}{xy}=-\frac{a_{m+n}}{4}.
\]
Equation~\eqref{eq:polarization} therefore requires
\begin{equation}
a_{m+n}=a_m+a_n,
\qquad m+n\le N.
\label{eq:additivity}
\end{equation}
Taking $m=1$ successively gives
\[
a_2=2a_1,\quad a_3=3a_1,\quad\ldots,\quad a_n=na_1.
\]
Hence, with $c_N=a_1$,
\[
a_n=c_Nn.
\]
Since $D|_{H_n}=-n\omega_0J_n$, this is equivalently
\[
A_N=-\frac{c_N}{\omega_0}D|_{\Hh_N},
\]
proving the necessity of \eqref{eq:selected}. For $N=1$, the same
expression holds directly with the unrestricted constant $c_N=a_1$.

For sufficiency, suppose \eqref{eq:selected} holds. Since $f$ is smooth,
let $F$ be a primitive of $f$, i.e., $F'=f$. Using the memoryless
relation $i=f(v)$ and periodicity of $v(t)$, 
\begin{align}
\ip{A_Nv}{\Pi_Nf(v)}
&=\ip{A_Nv}{f(v)} \nonumber\\
&=-\frac{c_N}{\omega_0T}
  \int_0^T f(v)\dot v\,\dd t \nonumber\\
&=-\frac{c_N}{\omega_0T}
  [F(v)]_0^T=0.
\label{eq:primitive}
\end{align}
Thus the selected weights satisfy the nulling condition. Apart from the
regularity needed above, the cancellation uses only periodicity and the
time-invariant memoryless relation $i=f(v)$; monotonicity and passivity are
not used.

Finally, compatibility across successive harmonic cutoffs requires
$q_{A_{N+1}}$ restricted to $\Hh_N\times\Hh_N$---i.e. to all $(v,i)$ pairs with $v,i\in\Hh_N$---to agree with $q_{A_N}$.
In particular, their actions on the common fundamental plane $H_1$ must
agree. Hence
\[
A_{N+1}|_{H_1}=A_N|_{H_1},
\]
which implies $c_{N+1}J_1=c_NJ_1$ and therefore $c_{N+1}=c_N$.
Thus all truncations share one constant $c$. The calibration
$q_{A_N}(e_{1,c},e_{1,s})=1$ then gives $c=1$.
\end{proof}

The selected functional is therefore the classical normalized derivative,
or Iliovici-type, scalar \cite{Jeltsema2014}:
\begin{equation}
 \QD(v,i)=\sum_{n=1}^N nQ_n
 =-\frac{\ip{Dv}{i}}{\omega_0}
 =\frac{\ip v{Di}}{\omega_0}.
 \label{eq:derivative}
\end{equation}
The last equality follows from periodic integration by parts.
The contribution of Theorem~\ref{thm:selection} is not the definition of
$\QD$ itself, but the converse result that the stated symmetry and
memoryless-resistor nulling requirements force this harmonic weighting,
up to calibration.

\subsection{Scope and Limitations of Theorem~\ref{thm:selection}}
The theorem is proved on a contiguous positive-frequency harmonic space
\(
\Hh_N^0=H_1\oplus H_2\oplus\cdots\oplus H_N,
\)
optionally augmented by DC. It contains every harmonic order from $1$ through $N$. This assumption is used in the proof because the quadratic mixing argument couples harmonic orders $m$, $n$, and $m+n$, leading to
\(
a_{m+n}=a_m+a_n.
\)
If the retained spectrum is sparse, for example if some intermediate harmonics are absent, such mixing relations may not be available for all pairs $(m,n)$, and the theorem does not by itself imply $a_n\propto n$ on that sparse set. By contrast, the information-loss result in Section~\ref{sec:information} requires only that the retained harmonic orders be distinct and positive.

The stated resistor family is a sufficient test family and is not claimed to be minimal. The necessity proof uses second-order mixing: a smooth law with $f''(0)\ne0$, tested at sufficiently small amplitudes, already yields the quadratic identity by expansion. Additional constitutive symmetries can reduce the available mixing identities. In particular, restricting the family to odd characteristics, $f(-v)=-f(v)$, excludes the quadratic test used here; an odd-only characterization may require a different argument and is not covered by this necessity proof. This does not assert that the conclusion fails for odd laws. Linear resistors impose no weight constraint because $\ip{A_Nv}{v}=0$. Identifying the weakest physical test class that still forces $a_n\propto n$ is outside the present scope.

DC does not need to be included in the test voltage because the witness waveforms used in the proof can be chosen with zero mean, and the operator $A_N$ acts trivially on the DC component. However, a nonlinear memoryless law can still generate a DC component in the current even when the applied voltage has zero mean. This causes no difficulty because the scalar readout depends only on the projection of the current onto the retained harmonic space. In particular, removing the voltage DC component before evaluating $f(v)$ would change the nonlinear input to the resistor and would therefore represent a different physical test. Similarly, imposing zero mean on the full nonlinear current would unnecessarily restrict the admissible resistor class.

Theorem~\ref{thm:selection} is formulated for a finite harmonic space. Extending the derivative-weighted scalar to arbitrarily high harmonics requires additional smoothness or harmonic-decay assumptions, because the weight on $Q_n$ grows linearly with harmonic order. For example, on a single normalized harmonic pair $(e_{n,c},e_{n,s})$, one has $\QD=n$, so the functional is not bounded on unrestricted $L^2_{\mathrm{per}}\times L^2_{\mathrm{per}}$, where $L^2_{\mathrm{per}}$ denotes periodic waveforms with finite RMS value. If the voltage has one square-integrable derivative, $v\in H^1_{\mathrm{per}}$, then
\begin{equation}
 |\QD(v,i)|
 \le \omega_0^{-1}\norm{Dv}\norm i,
 \label{eq:bound}
\end{equation}
which guarantees continuity for $i\in L^2_{\mathrm{per}}$. Thus, increasing the measurement bandwidth is not mathematically neutral for the derivative-weighted scalar: sufficient waveform smoothness or high-frequency decay is needed for the infinite-harmonic limit to remain well behaved.

\section{Incompatibility with Every-Harmonic Calibration}
\label{sec:conflict}
Theorem~\ref{thm:selection} identifies the harmonic weighting selected by the memoryless-resistor nulling criterion. The following result shows the corresponding tradeoff: this weighting cannot simultaneously preserve the conventional single-frequency reactive-power calibration on every harmonic plane.

\begin{corollary}[Calibration conflict]
\label{cor:conflict}
For $N\ge2$, no readout in Theorem~\ref{thm:selection}'s class both satisfies \eqref{eq:null} and agrees with conventional $Q_n$ on every isolated plane $H_n$.
\end{corollary}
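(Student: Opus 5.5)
The plan is a direct contradiction argument that leans on Theorem~\ref{thm:selection} for the form of admissible weights and on Proposition~\ref{prop:shifts} for reading off what ``agreement on an isolated plane'' means.

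\textbf{Step 1: reduce to weights.} Suppose, for contradiction, that some readout $q_{A_N}$ in the class of Theorem~\ref{thm:selection}, with $N\ge2$, satisfies \eqref{eq:null}. By Proposition~\ref{prop:shifts} it has the form $q_{A_N}(v,i)=\sum_{n=1}^N a_nQ_n$. By the necessity part of Theorem~\ref{thm:selection}, its weights must be $a_n=c_Nn$ for some real $c_N$.

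\textbf{Step 2: encode plane-by-plane agreement.} Agreement with the conventional $Q_n$ on every isolated plane $H_n$ means that for all $v,i\in H_n$ we have $q_{A_N}(v,i)=Q_n(v,i)$. Restricted to $H_n$, only the $n$th term survives, so this condition becomes $a_nQ_n=Q_n$ for all $v,i\in H_n$. Testing on the normalized pair $(e_{n,c},e_{n,s})$, where $Q_n=1$ by \eqref{eq:Qn}, forces $a_n=1$ for every $1\le n\le N$.

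\textbf{Step 3: derive the contradiction.} Combining Steps~1 and~2 with $n=1$ gives $c_N=1$. Then $n=2$, which is available because $N\ge2$, gives $a_2=2c_N=2\neq1$, a contradiction. Equivalently, $a_2=2a_1$ from the additivity relation \eqref{eq:additivity} is incompatible with $a_1=a_2=1$.

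\textbf{Main obstacle.} The only real point is fixing what ``agrees with conventional $Q_n$ on every isolated plane'' means. The plan interprets it as the restriction to $H_n\times H_n$ together with the normalized witness pair, so that $a_n=1$ follows. Once that is settled, the rest is immediate.

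I would also add a remark on the role of $N\ge2$. For $N=1$ the condition $a_1=1$ is consistent with the derivative weighting, so the hypothesis is necessary. The conflict first appears through the mixing of harmonics $1$ and $2$, which in the proof of Theorem~\ref{thm:selection} comes from the quadratic test law.
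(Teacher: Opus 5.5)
Your proposal is correct and matches the paper's proof. Testing on $(e_{n,c},e_{n,s})$ forces $a_n=1$ on every plane, while Theorem~\ref{thm:selection} forces $a_n=c_Nn$, which is incompatible at $n=2$; you obtain $c_N=1$ from the $n=1$ agreement rather than from the fundamental calibration, which amounts to the same thing.
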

\begin{proof}
Testing $(e_{n,c},e_{n,s})$ on each isolated plane requires $a_n=1$. Theorem~\ref{thm:selection}, calibrated at $n=1$, requires $a_n=n$. These already conflict at $n=2$.
\end{proof}

A simple two-harmonic example uses the voltage
\(
v=v_1+v_2=\cos\theta+\sin2\theta,
\)
where both harmonic peak amplitudes are set to unity for convenience.
Construct \eqref{eq:passive-extension} with $M>2$. Over this voltage range,
the current is exactly $i=Gv+\epsilon v^2$, while the globally defined
constitutive law remains smooth, strictly monotone, and passive. Since
\begin{equation}
 v^2=1+\tfrac12\cos2\theta-\tfrac12\cos4\theta
       +\sin\theta+\sin3\theta,
 \label{eq:two-expansion}
\end{equation}
the current components are
$i_0=\epsilon$, $i_1=G\cos\theta+\epsilon\sin\theta$, $i_2=G\sin2\theta+\frac{\epsilon}{2}\cos2\theta$, $i_3=\epsilon\sin3\theta$,
$i_4=-\frac{\epsilon}{2}\cos4\theta$.
If $V_{n,c},V_{n,s}$ and $I_{n,c},I_{n,s}$ denote peak-amplitude
coefficients, then, since both $\cos^2 n\theta$ and $\sin^2 n\theta$ average to $1/2$ over one fundamental period, 
\[
Q_n=\frac12
\left(V_{n,c}I_{n,s}-V_{n,s}I_{n,c}\right),
\]
Hence $Q_1=\tfrac12\left(1\cdot\epsilon-0\cdot G\right)=\frac{\epsilon}{2}$, 
$Q_2=\tfrac12\left(0\cdot G-1\cdot\frac{\epsilon}{2}\right)=-\frac{\epsilon}{4}$.
The DC component has no quadrature reactive-power term, while the generated
third- and fourth-harmonic currents do not contribute to diagonal $Q_n$
terms because the voltage has no corresponding harmonic components.
Therefore, the equal- and derivative-weighted scalar readouts give
\begin{equation}
 \QB=Q_1+Q_2=\frac{\epsilon}{4},
 \qquad
 \QD=Q_1+2Q_2=0.
 \label{eq:two-witness}
\end{equation}

The conflict is between two global calibration requirements. The derivative-weighted readout $Q_D$, calibrated to the reference fundamental, is internally consistent with the resistor-nulling criterion, whereas the equal-weighted readout $Q_B$ is consistent with preserving conventional reactive-power calibration on each harmonic plane.  Calling the second harmonic a new fundamental in a separate experiment rescales the derivative readout; it does not produce one fixed bilinear functional satisfying both requirements on the original space.

\section{Measurement Information Lost by Scalarization}
\label{sec:information}
The following result uses elementary rank--nullity to quantify the current information retained when waveform measurements are reduced to selected scalar power readouts. The contribution is this characterization of measurement-information loss within the stated model, not new linear algebra.

\subsection{Compatible currents at fixed voltage}
To quantify how much information these scalar measurements retain about an unknown current $i$ for a known voltage $v\ne0$, represent both waveforms in a finite-dimensional real inner-product space $H$ with $\dim H=d$. Suppose the available reduced data are the active power $P$ and
\begin{equation}
\begin{aligned}
q_j&=\ip{A_jv}{i},\qquad A_j^*=-A_j,\\
L_v&=\Span\{A_1v,\ldots,A_kv\},\qquad r_v=\dim L_v.
\end{aligned}
\label{eq:observations}
\end{equation}
Fig.~\ref{fig:flow} summarizes this distinction:
the complete compatible pair $(P,B)$ preserves the current at fixed
$v\neq0$, whereas the reduced scalar measurements observe only the
directions spanned by $A_jv$ in addition to the active direction.
The operators $A_j$ are specified independently of $i$. No common-shift assumption is needed for the next theorem, and DC components and arbitrary finite waveform bases are allowed.

\begin{figure}
\centering
\input{figures/information_flow.tex}
\caption{Fixed nonzero voltage: full compatible power data recover the current, while scalar readouts retain only the active direction and the span of $A_jv$; the remaining directions are unobserved.
}
\label{fig:flow}
\end{figure}

\begin{theorem}[Exact compatible-current dimension]
\label{thm:rank}
For compatible measured values in \eqref{eq:observations}, let
$\mathcal F_{v,P,q}\subset H$ denote the set of all currents $i$ that produce
the prescribed active power $P$ and scalar readouts $q_j$ for the known
voltage $v$. Then $\mathcal F_{v,P,q}$ is an affine space of dimension
\begin{equation}
 \dim\mathcal F_{v,P,q}=d-1-r_v.
 \label{eq:fiber-dim}
\end{equation}
If $i_*$ is any one compatible current, then
\begin{equation}
 \mathcal F_{v,P,q}=i_*+(v^\perp\cap L_v^\perp).
 \label{eq:fiber}
\end{equation}
Every current in $H$ is uniquely recoverable from these measurements if and only if $r_v=d-1$.
\end{theorem}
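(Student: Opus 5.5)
The plan is to treat the reduced data as the value of one linear observation map and to read off its fiber. For the known voltage $v\neq0$, define $\Phi_v:H\to\R^{1+k}$ by
\[
\Phi_v(i)=\bigl(\ip v i,\ip{A_1v}{i},\ldots,\ip{A_kv}{i}\bigr).
\]
Then $\mathcal F_{v,P,q}=\Phi_v^{-1}(P,q_1,\ldots,q_k)$. The hypothesis that the measured values are compatible means exactly that this preimage is nonempty, so we may fix a compatible current $i_*$. By linearity, $i\in\mathcal F_{v,P,q}$ if and only if $i-i_*\in\ker\Phi_v$. This shows that the fiber is the affine space $i_*+\ker\Phi_v$.

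Next I identify the kernel. A vector $w$ lies in $\ker\Phi_v$ exactly when it is orthogonal to $v$ and to every $A_jv$. Hence
\[
\ker\Phi_v=\bigl(\Span\{v\}+L_v\bigr)^\perp=v^\perp\cap L_v^\perp,
\]
which gives \eqref{eq:fiber}. It remains to compute $\dim(\Span\{v\}+L_v)$. Since $H$ is finite-dimensional, the orthogonal complement then has dimension $d$ minus this number.

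The step that needs the hypotheses is showing that $v$ adds exactly one dimension to $L_v$. Skew-adjointness, as in \eqref{eq:skew}, gives $\ip{A_jv}{v}=-\ip{v}{A_jv}$, so $\ip{A_jv}{v}=0$ for every $j$. Therefore $L_v\subset v^\perp$. Because $v\neq0$, the line $\Span\{v\}$ meets $L_v$ only in $0$ (a vector lying in both would be orthogonal to itself). So the sum is an orthogonal direct sum of dimension $1+r_v$, and $\dim\ker\Phi_v=d-1-r_v$, which is \eqref{eq:fiber-dim}.

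This orthogonality is the only place where alternation of the readouts enters, and it is what I expect to be the main point to get right. For a general non-skew $A_j$, the direction $A_jv$ could overlap with the active direction, and the count would drop. I would also remark that $r_v\le d-1$ automatically, because $L_v\subset v^\perp$.

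Finally, uniqueness follows from the dimension formula. For compatible data the fiber is a single point exactly when its dimension is zero, that is, when $r_v=d-1$. Conversely, if $r_v<d-1$, then every fiber, including the one through any given current, contains a nonzero direction $w\in v^\perp\cap L_v^\perp$. The currents $i$ and $i+w$ then give identical readouts, so they cannot be distinguished. This proves the last claim of the theorem, and it matches the equivalence with Proposition~\ref{prop:reconstruction} in the extreme case where the readouts span all of $v^\perp$.
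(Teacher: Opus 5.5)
Your proof is correct and follows essentially the same route as the paper. Both identify the fiber as $i_*+(v^\perp\cap L_v^\perp)$ and use skew-adjointness to get $L_v\subseteq v^\perp$, from which $\dim\mathcal F_{v,P,q}=d-1-r_v$ and the uniqueness criterion follow; your kernel-of-$\Phi_v$ framing and the orthogonal direct sum $\Span\{v\}\oplus L_v$ are just a slightly more explicit form of the paper's dimension count inside $v^\perp$.
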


\begin{proof}
Since $P=\ip{v}{i}$, the component of $i$ parallel to $v$ is fixed, so every compatible current can be written uniquely as $i=(P/V^2)v+z$, with $z\in v^\perp$ and $V^2=\ip{v}{v}$. Since $A_j^*=-A_j$, we have $\ip{A_jv}{v}=0,$ i.e., $A_jv\in v^\perp$, and therefore $q_j=\ip{A_jv}{z}$. Thus, the readouts constrain $z$ only along subspace $L_v=\Span\{A_1v,\ldots,A_kv\}$ of dimension $r_v$.

If $i_*$ is one compatible current, then another current $i_*+\delta i$ gives the same $P$ and $q_j$ if and only if $\ip{v}{\delta i}=0$ and $\ip{A_jv}{\delta i}=0$ for all $j$, i.e., $\delta i\in v^\perp\cap L_v^\perp$. Hence, $\mathcal F_{v,P,q}=i_*+(v^\perp\cap L_v^\perp)$. Since $L_v\subseteq v^\perp$, $\dim v^\perp=d-1$, and $\dim L_v=r_v$, it follows that $\dim\mathcal F_{v,P,q}=d-1-r_v$. Therefore, the current is uniquely recoverable if and only if this dimension is zero, equivalently, if and only if $r_v=d-1$.
\end{proof}

Fig.~\ref{fig:fiber} illustrates the case $d=3$ and $r_v=1$: $P$ fixes the component of $i$ along $v$, the readout fixes the component along $L_v$, and the remaining direction $v^\perp\cap L_v^\perp$ is the compatible-current fiber, whose general dimension is $d-1-r_v$. Any nonzero perturbation invisible to these scalar measurements changes the complete voltage--current bivector $v\wedge i$, because $v\wedge\delta i=0$ implies $\delta i\parallel v$, while invisibility to $P$ requires $\delta i\perp v$. Thus, \emph{the scalar measurements can discard current information retained by the bivector}.

The relevant rank is evaluated at the actual voltage $v$. Distinct or linearly independent operators $A_j$ need not produce linearly independent vectors $A_jv$. Therefore, unrestricted recovery of $i$ requires at least $d-1$ scalar skew readouts whose vectors $A_jv$ span $v^\perp$, in addition to the known $v$ and $P$. This bound applies to linear scalar observations at fixed voltage; it is not a lower bound for arbitrary nonlinear encodings or for currents already constrained by a known model. With inconsistent or noisy data, the compatible-current set
$\mathcal F_{v,P,q}$ may be empty; in that case,
the dimension formula \eqref{eq:fiber-dim} does not apply.

Theorem~\ref{thm:rank} counts ambiguity in admissible current directions, rather than in arbitrary bivector coordinates. Each current $i\in\mathcal F_{v,P,q}$ defines a voltage--current pair $(v,i)$ and hence a corresponding bivector $v\wedge i$. The theorem does not assert that every such current can arise from a specified passive, causal, or reciprocal circuit family. Intersecting the affine fiber with the response set of such a circuit family can reduce the ambiguity. Thus, the theorem concerns information recoverable from the stated measurements, rather than realizability by a particular network model.

\begin{figure}[t]
\centering
\input{figures/current_fiber.tex}
\caption{
One nonzero scalar readout in a three-dimensional waveform space, shown in $v^\perp$ (the plane). $P$ fixes the component along $v$, while $q=\langle Av,z\rangle$ fixes the component of $z=i-Pv/V^2$ along the direction $Av$ (horizontal); the orthogonal (vertical) direction remains unobserved.}
\label{fig:fiber}
\end{figure}

\subsection{Residual ambiguity under all invariant scalar readouts}
\begin{corollary}[Harmonic invisible directions]
\label{cor:harmonic}
Let $H$ be the sum of $N$ distinct positive-frequency scalar harmonic planes, with no DC, and let $v_n\ne0$ on every plane. Even all alternating bilinear common-shift-invariant scalar readouts, together with $P$ and $v$, leave exactly $N-1$ current directions invisible. They are
\begin{equation}
 \delta i=\sum_n c_nv_n,\qquad
 \sum_n c_nV_n^2=0,\qquad V_n=\norm{v_n}.
 \label{eq:invisible}
\end{equation}
\end{corollary}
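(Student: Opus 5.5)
The plan is to combine Proposition~\ref{prop:shifts} with Theorem~\ref{thm:rank}. The family of all alternating bilinear common-shift-invariant readouts is finite-dimensional. By Proposition~\ref{prop:shifts}, it is spanned by the $N$ harmonic readouts $Q_n(v,i)=\ip{J_nv_n}{i_n}$, one for each retained plane. I will check that the proof of Proposition~\ref{prop:shifts} only uses that the harmonic orders are distinct and positive, not that they are contiguous, so it applies to the present $H$. Knowing every readout in the class is then equivalent to knowing $Q_1,\ldots,Q_N$. These are the skew readouts of \eqref{eq:observations} with $A_n:=J_n\Pi_{H_n}$, i.e.\ $J_n$ on $H_n$ and zero elsewhere, which is skew-adjoint.

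Next I compute $L_v=\Span\{A_nv\}=\Span\{J_nv_n\}$. Since $v_n\neq0$ and $J_n$ is invertible on $H_n$, each $J_nv_n$ is nonzero. The vectors lie in mutually orthogonal planes, so they are linearly independent and $r_v=N$. With $d=2N$, Theorem~\ref{thm:rank} gives fiber dimension $d-1-r_v=2N-1-N=N-1$. This is the count of invisible directions.

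To identify the directions, I use the fiber description \eqref{eq:fiber}: the invisible space is $v^\perp\cap L_v^\perp$. A perturbation $\delta i=\sum_n\delta i_n$ lies in $L_v^\perp$ exactly when $\ip{J_nv_n}{\delta i_n}=0$ for every $n$. Within the two-dimensional plane $H_n$, $v_n$ and $J_nv_n$ form an orthogonal basis of equal norm, because $J_n$ is a rotation by $90^\circ$. So each such $\delta i_n$ must be $c_nv_n$. Imposing $\delta i\perp v$ then gives $\sum_n c_n\ip{v_n}{v_n}=\sum_nc_nV_n^2=0$, which is \eqref{eq:invisible}. This is one nontrivial linear condition on $(c_1,\ldots,c_N)$, since $V_n>0$. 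The solution space therefore has dimension $N-1$, which matches the rank count and confirms the word ``exactly''.

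The only real care point is the first step: making sure that ``all invariant readouts'' contributes no more observed directions than the span of the $Q_n$. Observing every element of a linear family of functionals of $i$ is the same as observing a basis of that family. So by Proposition~\ref{prop:shifts}, the observed space is exactly $\Span\{J_nv_n\}$. The rest is routine.
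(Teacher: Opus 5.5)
Your proposal is correct and follows essentially the same route as the paper: Proposition~\ref{prop:shifts} extends to distinct positive orders and reduces the whole invariant class to the separate $Q_n$; the nonzero, mutually orthogonal directions $J_nv_n$ give $r_v=N$, so Theorem~\ref{thm:rank} yields $2N-1-N=N-1$; and orthogonality to each $J_nv_n$ and to $v$ gives \eqref{eq:invisible}. You also reduce the infinite readout family explicitly to the $N$ skew operators $J_n\Pi_{H_n}$, and you check independently that $\sum_n c_nV_n^2=0$ cuts out an $(N-1)$-dimensional space; these make the paper's terse argument fully explicit.
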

\begin{proof}
The character argument of Proposition~\ref{prop:shifts} applies to any distinct positive orders. All these readouts together give precisely the separate $Q_n$. Their observation directions $J_nv_n$ are nonzero and mutually orthogonal, so $r_v=N$. Because $d=2N$, Theorem~\ref{thm:rank} gives $N-1$. Orthogonality to $J_nv_n$ forces each invisible component to be parallel to $v_n$, and orthogonality to $v$ gives the constraint in \eqref{eq:invisible}.
\end{proof}

These are in-phase harmonic redistributions at fixed total active power.
The scattered-current and conductance-dispersion interpretation is
established in CPC
\cite{CzarneckiSwietlicki1990,Czarnecki2008,Czarnecki2019}, with related geometric decompositions in \cite{Cieslinski2024}. The statement here identifies these established current directions as the kernel of the entire invariant scalar observation class; it does not introduce a new current component. If the $j$th scalar readout has the form
$q_j=\sum_{n=1}^N a_{jn}Q_n$, let $[a_{jn}]$ denote the corresponding
readout-weight matrix. When all voltage harmonic planes are excited,
$r_v=\rank[a_{jn}]$. Thus, even if the readouts recover every diagonal
harmonic quantity $Q_n$, they determine only the quadrature component of
the current in each harmonic plane. The in-phase components remain
constrained only through their aggregate contribution to $P$, leaving
$N-1$ independent redistribution directions. This $N-1$ count assumes that all $N$ voltage harmonic planes are excited.
If only $s$ planes are excited in the $2N$-dimensional positive-frequency
space, the full invariant readout class has rank $s$, and
\begin{equation}
\dim\mathcal F_{v,P,q}=2N-1-s.    
\end{equation}
The ambiguity consists of $s-1$ in-phase redistribution directions among
the excited planes and two unconstrained current coordinates on each of
the $N-s$ unexcited planes. If a DC component is included, its skew block
is zero, and the general formula $d-1-r_v$ remains applicable, with the
DC current constrained only through active power.

\subsection{Implications for metering and compensation}
For a chosen scalar measurement set, Theorem~\ref{thm:rank} identifies the invisible current perturbations exactly as $v^\perp\cap L_v^\perp$. Identical readings determine the same current only when this space is zero. Measurement sufficiency therefore depends on the selected readouts and the intended objective; the result characterizes deterministic information loss, rather than measurement uncertainty or general system observability.

The rank result has a direct metering interpretation: scalar power
measurements can determine the quadrature harmonic components while
discarding information about how in-phase current is distributed among
harmonics.
For an LTI harmonic record on excited voltage planes, write $i_n=g_nv_n+b_nJ_nv_n$. The separate powers $Q_n=b_nV_n^2$ determine the quadrature coefficients, while $P=\sum g_nV_n^2$ determines only one weighted sum of conductances. Perturbations $g_n\mapsto g_n+c_n$ satisfying \eqref{eq:invisible} are invisible. Complete phasors directly contain these conductances. The information is discarded only when measurements are reduced to the selected scalars.

This ambiguity also has a direct compensation interpretation. Currents with identical total active power and all time-shift-invariant alternating scalar readouts can still differ in RMS current, harmonic conductance distribution, conductor $I^2R$ burden, compensation requirements, and waveform shape because invisible in-phase redistribution changes the harmonic conductance pattern, as illustrated by \eqref{eq:pair}--\eqref{eq:pair-difference}. Thus, a scalar meter may be sufficient for one measurement objective but not another. The rank result identifies the missing current directions; a known constitutive model, harmonic active-power measurements, or complete current coefficients can supply additional information. It does not, however, prescribe objective-specific measurement requirements or a particular compensator, control strategy, or tariff.

\section{Physical Interpretation and Applicability Boundaries}
\label{sec:physical}
\subsection{Storage-imbalance interpretation in the LTI setting}
For a passive lumped LTI RLC realization in a periodic steady state, with constant ideal transformers permitted, all ports accounted for, and no active, gyrator, or radiative elements in the asserted realization class, the classical branch balance gives
\begin{equation}
 Q_n=2\omega_n(\overline W_{L,n}-\overline W_{C,n}),
 \qquad \omega_n=n\omega_0>0.
 \label{eq:lti-energy}
\end{equation}
Here $\overline W_{L,n}$ is the cycle-mean magnetic energy of the $n$th harmonic, summed over all inductors in the specified realization, and $\overline W_{C,n}$ is the corresponding cycle-mean electric energy summed over all capacitors. One checks the branch factors directly: $Q_{L,n}=\omega_nLI_n^2$, $\overline W_{L,n}=LI_n^2/2$, where $I_n$ is the branch RMS current; for a capacitor, $Q_{C,n}=-\omega_n C V_n^2$ and $\overline W_{C,n}=C V_n^2/2$, with branch RMS voltage $V_n$. Kirchhoff balance cancels internal port powers; ideal transformers contribute no storage. This energy interpretation uses a specified realization in addition to terminal waveforms, consistent with classical nonlinear and differentiated-power circuit theory \cite{Jeltsema2003,GarciaCanseco2005}. Equation~\eqref{eq:lti-energy} represents magnetic-minus-electric storage imbalance, not total stored energy. In particular,
\(
Q_n/(2\omega_n)=\overline W_{L,n}-\overline W_{C,n},
\)
so a small or even zero value can result from cancellation between two individually nonzero stored-energy terms.
The relation in \eqref{eq:lti-energy} also excludes energy stored in any DC component. Using the derivative weighting $Q_D=\sum_n nQ_n$ and $\omega_n=n\omega_0$ gives
\begin{equation}
 \QD=\frac{2}{\omega_0}\sum_n\omega_n^2
 \left(\overline W_{L,n}-\overline W_{C,n}\right),
 \label{eq:weighted-energy}
\end{equation}
within the same finite-dimensional LTI setting. 
Thus, the derivative readout $Q_D$ and the inverse-frequency readout
$e_\Delta$ in Table~\ref{tab:selectors} summarize the same harmonic
magnetic--electric energy imbalance with different frequency weightings:
$Q_D$ emphasizes higher-frequency harmonics, whereas $e_\Delta$ equals the summed modal energy imbalance under the stated LTI
assumptions.

\vspace{-0.1in}
\subsection{Nonlinear and time-varying circuit elements}
Consider smooth time-invariant ideal laws $q=q(v)$ and $\phi=\phi(i)$ on intervals containing zero and the respective trajectory ranges, with $q(0)=\phi(0)=0$ and $q',\phi'>0$. Set physical energy to zero at the origin. For the capacitor, $W_C(v)=\int_0^v\xi q'(\xi)\,\dd\xi$ and current $i=\dot q$; for the inductor, $W_L(i)=\int_0^i\xi\phi'(\xi)\,\dd\xi$ and voltage $v=\dot\phi$. On smooth periodic trajectories, substitution in \eqref{eq:derivative} gives
\begin{equation}
 Q_{D,C}=-\frac{\avg{q'(v)\dot v^2}}{\omega_0},
 \qquad
 Q_{D,L}=\frac{\avg{\phi'(i)\dot i^2}}{\omega_0}.
 \label{eq:nonlinear-motion}
\end{equation}
These differentiated-power relations depend jointly on the device
constitutive law and on waveform motion
\cite{Jeltsema2003,GarciaCanseco2005}. In charge or flux coordinates,
they can be expressed through second derivatives of the corresponding
energy or coenergy functions, but they do not generally reduce to a
constant multiple of mean stored energy. Moreover, energy and coenergy
coincide only for linear constitutive laws. A constant biased state may
therefore store nonzero energy while giving zero derivative readout.

The time-invariance assumption in the resistor-nulling result is also
essential \cite{Jeltsema2014}. For a positive time-varying conductance
$i=g(t)v$, periodic integration by parts gives
\begin{equation}
 \QD=\frac{\avg{\dot g\,v^2}}{2\omega_0}.
 \label{eq:modulated}
\end{equation}
For example, with $v=a\cos\theta$ and
$g=g_0+g_2\sin2\theta$, $g_0>|g_2|$, one obtains
$\QD=a^2g_2/4$. Thus $\QD$ can be positive or negative even though
$vi=gv^2\ge0$ at all times. The ideal one-port itself has no storage
state; any physical mechanism that modulates $g(t)$ would require
additional ports or internal states to be modeled explicitly. Hence,
neither a nonzero $\QD$ nor its sign is, by itself, a universal indicator
of electrical energy storage.

Finally, recovering the terminal current is not the same as recovering
internal stored energy. Dissipativity theory allows multiple storage
functions to certify the same terminal behavior
\cite{Willems1972I,Willems1972II}, and such a storage function need not
coincide with the physical energy of a specified device. Likewise,
Proposition~\ref{prop:reconstruction} reconstructs the terminal current
waveform, but not the internal state, constitutive law, or circuit
topology. Any stronger statement about internal energy therefore requires
additional assumptions on the physical realization and its admissible
states.

\section{Exact Examples and Numerical Illustration}
\label{sec:examples}
\subsection{A globally passive cubic resistor}
The two-harmonic example in Section~\ref{sec:conflict} uses a smooth,
globally passive resistor law that reproduces a quadratic nonlinearity
over the voltage range of interest. A simpler globally passive polynomial
example is
\begin{equation}
 v=a\cos\theta+b\sin3\theta,\quad
 i=Gv+\alpha v^3,\quad G,\alpha>0.
 \label{eq:cubic}
\end{equation}
Here $f'(v)=G+3\alpha v^2>0$ and $vi=Gv^2+\alpha v^4\ge0$. The element has no dynamic state. In $v^3$, the sine coefficient at harmonic one is $3a^2b/4$, and the cosine coefficient at harmonic three is $a^3/4$. Therefore
\[
 Q_1=\frac{3\alpha a^3b}{8},\quad
 Q_3=-\frac{\alpha a^3b}{8},\quad
 \QB=\frac{\alpha a^3b}{4},\quad \QD=0. 
\]
The current also contains fifth-, seventh-, and ninth-harmonic components, whose diagonal powers vanish because the corresponding voltage harmonics are absent, although their contributions to the full voltage--current bivector need not. With $E_n=e_{n,c}\wedge e_{n,s}$ denoting the oriented unit bivector of the $n$th harmonic plane, the diagonal bivector is proportional to $3E_1-E_3\ne0$, while $Q_D=Q_1+3Q_3=0$. Thus, cancellation of a scalar projection does not imply cancellation of the underlying bivector components.

Table~\ref{tab:resistor} compares the analytic values with Fourier readouts for $a=2$~V, $b=1$~V, $G=0.4$~S, and $\alpha=0.2$~A/V$^3$. The current coefficients were calculated on 16\,384 equally spaced samples over one period, with all polynomial harmonics retained. Independent 768-node Gauss--Legendre quadrature of $-\dot v i/\omega_0$ gives $1.27898\times10^{-13}$~VA for the theoretically zero derivative readout. These calculations illustrate the exact identities; the universal cancellation follows from the proof.

\begin{table}[t]
\caption{Cubic-resistor example: analytic and Fourier readouts}
\label{tab:resistor}
\centering\footnotesize
\begin{tabular}{@{}lrr@{}}
\toprule
Readout & Exact value (VA) & Fourier value (VA) \\
\midrule
$Q_1$ & $0.6$ & $0.600000000000$ \\
$Q_3$ & $-0.2$ & $-0.200000000000$ \\
$\QB=Q_1+Q_3$ & $0.4$ & $0.400000000000$ \\
$\QD$ & $0$ & $-2.53484\times10^{-17}$ \\
\bottomrule
\end{tabular}
\end{table}

\subsection{Identical scalars, readouts, different current distributions}
Use the RMS basis $(e_{1,c},e_{1,s},e_{2,c},e_{2,s})$ and normalized units. For $G>1$, take $v=(1,0,1,0)$ and
\begin{equation}
 i_0=(G,0,G,0),\qquad i_1=(G+1,0,G-1,0).
 \label{eq:pair}
\end{equation}
Both records have $P=2G$ and $Q_1=Q_2=0$, so every readout in
\eqref{eq:weights} agrees. Nevertheless,
\begin{equation}
 v\wedge i_0=0,\quad
 v\wedge i_1=-2e_{1,c}\wedge e_{2,c},
 \quad
 \norm{i_1}^2-\norm{i_0}^2=2.
 \label{eq:pair-difference}
\end{equation}
Thus, a meter retaining $P$ and any collection of the invariant reactive-power readouts in \eqref{eq:weights} would report identical power quantities for these two records, although they differ along the single invisible in-phase direction for $N=2$ identified by Theorem~\ref{thm:rank}. The harmonic conductances change from $(G,G)$ to $(G+1,G-1)$, with $I_0=\sqrt{2G^2}$ and $I_1=\sqrt{2G^2+2}$; for a frequency-independent series resistance $R_\ell>0$, the corresponding loss difference is $R_\ell(I_1^2-I_0^2)=2R_\ell$. At the same $v$ and $P$, the minimum-RMS active current is $Pv/V^2=i_0$, so an ideal shunt compensator attaining that objective would cancel $i_1-i_0$ in the second record but nothing in the first. Both records have positive harmonic conductance data, but the example concerns admitted waveform records rather than synthesis of arbitrary passive causal devices. Complete harmonic current phasors, or an additional RMS-current measurement, distinguish the pair; such measurements enlarge the observation set beyond the scalar readouts considered in Theorem~\ref{thm:rank}.

\vspace{-0.1in}
\section{Discussion and Conclusions}
\label{sec:discussion}

The sinusoidal case is exceptional because a single oriented harmonic plane has only one quadrature direction. For known nonzero voltage, the active and quadrature directions span that plane, so $P$ and one calibrated $Q$ recover the current. With multiple harmonic planes, total active power compresses all in-phase components into one scalar, while time-shift-invariant readouts retain only selected quadrature information. The resulting ambiguity is characterized by \eqref{eq:fiber}; for the full invariant class with all voltage planes excited, it reduces to the $N-1$ in-phase redistribution directions in \eqref{eq:invisible}.

Different scalar readouts therefore reflect different structural requirements and measurement objectives. Equal harmonic weighting preserves conventional plane-by-plane calibration, derivative weighting is selected by the stated memoryless-resistor nulling criterion and fixed temporal reference, and inverse-frequency weighting represents a storage imbalance under the restricted LTI assumptions. These properties are not equivalent: resistor nulling does not imply absence of all nonactive current, and no single scalar generally determines a compensation objective or physical stored energy. More generally, active power together with skew scalar readouts leaves a compatible-current fiber of dimension $d-1-r_v$; additional scalar formulas provide new information only when their associated directions $A_jv$ increase the observation rank. Complete waveform or phasor measurements retain the missing current information, while a known circuit model may further restrict the admissible current space.

The exterior formulation makes the retained and discarded directions explicit but does not add information beyond a complete voltage--current description. Its role is to expose the geometry of scalarization and its information limits. The results apply to the stated periodic, finite-dimensional, fixed-reference setting; multiphase multiplicities, hysteresis, explicit modulation, sparse harmonic support, nonlinear measurements, and model-restricted synthesis require additional assumptions. Within this scope, the converse selection result explains why universal smooth monotone memoryless-resistor nulling forces the known derivative/Iliovici weighting, while the rank theorem identifies exactly which current directions remain invisible after scalarization. Together, these results clarify what nonsinusoidal scalar power measurements specify and whether the retained information is sufficient for a given measurement objective.

\vspace{-0.1in}
\section*{Acknowledgment}
The author used OpenAI ChatGPT to assist in checking mathematical derivations and proofs and in editing Sections II--VIII. All mathematical developments and conclusions were independently reviewed and verified by the author.

\bibliographystyle{IEEEtran}
\bibliography{references}
\end{document}

%% file: figures/information_flow.tex
% Schematic from reconstruction and fixed-voltage rank theorem; no data fit.
\begin{tikzpicture}[
 box/.style={draw,rounded corners=2pt,align=center,text width=5.0cm,
 inner sep=5pt,font=\scriptsize},
 >=Latex,node distance=5mm]

\node[box] (current)
{Fixed known voltage $v\ne0$\\Current $i\in H$};

\node[box,below=of current] (full)
{Compatible $P=\langle v,i\rangle$, $B=v\wedge i$,\\
Current recovered exactly};

\node[box,below=of full] (scalars)
{$P$ and scalar readouts $q_j=\langle A_jv,i\rangle$\\
$L_v=\operatorname{span}\{A_1v,\ldots,A_kv\}$\\
Invisible directions: $v^\perp\cap L_v^\perp$};

\draw[<->] (current)--(full);
\draw[->] (full)--(scalars);

\end{tikzpicture}

%% file: figures/current_fiber.tex
% Generic d=3, r_v=1 schematic in the active-power residual plane.
\begin{tikzpicture}[x=1cm,y=1cm,>=Latex,font=\small]
\fill[gray!7] (-.3,-.4) rectangle (6.8,2.9);

% Label the displayed plane
%\node[anchor=north west,font=\footnotesize] at (.55,2) {$v^\perp$};

% Voltage direction normal to the displayed plane
\node at (.15,.15) {$\odot$};
\node[font=\footnotesize] at (.35,.15) {$v$};

% Axes in v^\perp
\draw[->] (0,0)--(6.5,0)
  node[right] {$Av/\norm{Av}$};
\draw[->] (0,0)--(0,2.6)
  node[right] {invisible direction};

% Compatible-current fiber in the residual plane
\draw[very thick,blue!65!black] (3,-.2)--(3,2.5);

% One compatible residual current z_*
\draw[dashed] (0,1)--(3,1);
\fill (3,1) circle (2pt) node[right] {$z_*$};

% Invisible perturbation
\draw[->,thick] (3,1)--(3,2.05)
  node[right] {$\delta i$};

% Measurement interpretation
\node[align=left] at (5.0,2.3)
  {same readout $q$\\along the fiber};

\node[below] at (3,0)
  {$q/\norm{Av}$};

\node[align=left,font=\footnotesize] at (4.95,.72)
  {$z=i-\dfrac{P}{V^2}v\in v^\perp$\\
   $z_*=i_*-\dfrac{P}{V^2}v$};

\end{tikzpicture}

%% file: references.bib
@article{KosobudzkiLadniak2026,
  author={Kosobudzki, Grzegorz and {\L}adniak, Leszek},
  title={Generalization of Reactive Power Definition for Periodical Waveforms},
  journal={International Journal of Electrical and Computer Engineering},
  year={2026}, volume={16}, number={1}, pages={102--110},
  doi={10.11591/ijece.v16i1.pp102-110}
}

@article{HongDeLeon2015,
  author={Hong, Tianqi and de Le{\'o}n, Francisco},
  title={Lissajous Curve Methods for the Identification of Nonlinear
  Circuits: Calculation of a Physical Consistent Reactive Power},
  journal={IEEE Transactions on Circuits and Systems I: Regular Papers},
  year={2015},
  volume={62},
  number={12},
  pages={2874--2885},
  doi={10.1109/TCSI.2015.2495780}
}

@article{LaWhiteIlic1997,
  author={LaWhite, Niels and Ili{\'c}, Marija D.},
  title={Vector Space Decomposition of Reactive Power for Periodic Nonsinusoidal Signals},
  journal={IEEE Transactions on Circuits and Systems I: Fundamental Theory and Applications},
  year={1997}, volume={44}, number={4}, pages={338--346},
  doi={10.1109/81.563623}
}

@article{Czarnecki1987,
  author={Czarnecki, Leszek S.},
  title={What Is Wrong with the {Budeanu} Concept of Reactive and Distortion Power and Why It Should Be Abandoned},
  journal={IEEE Transactions on Instrumentation and Measurement},
  year={1987}, volume={IM-36}, number={3}, pages={834--837}
}

@article{CzarneckiSwietlicki1990,
  author={Czarnecki, Leszek S. and Swietlicki, Tadeusz},
  title={Powers in Nonsinusoidal Networks: Their Interpretation, Analysis, and Measurement},
  journal={IEEE Transactions on Instrumentation and Measurement},
  year={1990}, volume={39}, number={2}, pages={340--345}
}

@article{Czarnecki2008,
  author={Czarnecki, Leszek S.},
  title={Currents' Physical Components ({CPC}) Concept: A Fundamental of Power Theory},
  journal={Przegl{\k a}d Elektrotechniczny},
  year={2008}, volume={84}, number={6}, pages={28--37}
}

@article{Czarnecki2019,
  author={Czarnecki, Leszek S.},
  title={Currents' Physical Components ({CPC})--Based Power Theory: A Review. Part {I}: Power Properties of Electrical Circuits and Systems},
  journal={Przegl{\k a}d Elektrotechniczny},
  year={2019}, volume={95}, number={10}, pages={1--11},
  doi={10.15199/48.2019.10.01}
}

@article{Willems2011,
  author={Willems, Jacques L.},
  title={{Budeanu}'s Reactive Power and Related Concepts Revisited},
  journal={IEEE Transactions on Instrumentation and Measurement},
  year={2011}, volume={60}, number={4}, pages={1182--1186},
  doi={10.1109/TIM.2010.2090704}
}

@misc{Jeltsema2014,
  author={Jeltsema, Dimitri and van der Woude, Jacob W. and Hartman, Marek T.},
  title={A Novel Time-Domain Perspective of the {CPC} Power Theory: Single-Phase Systems},
  year={2014}, howpublished={arXiv:1403.7842v2}
}

@article{Jeltsema2003,
  author={Jeltsema, Dimitri and Ortega, Romeo and Scherpen, Jacquelien M. A.},
  title={On Passivity and Power-Balance Inequalities of Nonlinear {RLC} Circuits},
  journal={IEEE Transactions on Circuits and Systems I: Fundamental Theory and Applications},
  year={2003}, volume={50}, number={9}, pages={1174--1179},
  doi={10.1109/TCSI.2003.816332}
}

@inproceedings{GarciaCanseco2005,
  author={Garc{\'i}a-Canseco, Elo{\'i}sa and Gri{\~n}{\'o}, Robert and Ortega, Romeo and Salichs, Miguel and Stankovic, Alexander},
  title={Power Factor Compensation of Electrical Circuits: A Control Theory Viewpoint},
  booktitle={Congreso Nacional de Control Autom{\'a}tico},
  address={Cuernavaca, Mexico}, year={2005}, month=oct,
  note={Paper AMCA05031}
}

@article{SupertiFurgaPinola1994,
  author={{Superti Furga}, G. and Pinola, L.},
  title={The Mean Generalized Content: A Conservative Quantity in Periodically-Forced Non-Linear Networks},
  journal={European Transactions on Electrical Power},
  year={1994}, volume={4}, number={3}, pages={205--212},
  doi={10.1002/etep.4450040305}
}

@article{SupertiFurga1994,
  author={{Superti Furga}, G.},
  title={Searching for a Generalization of the Reactive Power---A Proposal},
  journal={European Transactions on Electrical Power},
  year={1994}, volume={4}, number={5}, pages={411--417},
  doi={10.1002/etep.4450040515}
}

@article{Menti2007,
  author={Menti, Anthoula and Zacharias, Thomas and Milias-Argitis, John},
  title={Geometric Algebra: A Powerful Tool for Representing Power under Nonsinusoidal Conditions},
  journal={IEEE Transactions on Circuits and Systems I: Regular Papers},
  year={2007}, volume={54}, number={3}, pages={601--609},
  doi={10.1109/TCSI.2006.887608}
}

@misc{Montoya2020,
  author={Montoya, F. G. and Rold{\'a}n-P{\'e}rez, J. and Alcayde, A. and Arrabal-Campos, F. M. and Ba{\~n}os, R.},
  title={Geometric Algebra Power Theory in Time Domain},
  year={2020}, howpublished={arXiv:2002.05458v4}
}

@article{Montoya2021,
  author={Montoya, Francisco G. and Ba{\~n}os, Ra{\'u}l and Alcayde, Alfredo and Arrabal-Campos, Francisco Manuel and Rold{\'a}n-P{\'e}rez, Javier},
  title={Vector Geometric Algebra in Power Systems: An Updated Formulation of Apparent Power under Non-Sinusoidal Conditions},
  journal={Mathematics}, year={2021}, volume={9}, number={11}, pages={1295},
  doi={10.3390/math9111295}
}

@article{Cieslinski2024,
  author={Cie{\'s}li{\'n}ski, Jan L. and Walczyk, Cezary J.},
  title={Geometric Algebra Framework Applied to Single-Phase Linear Circuits with Nonsinusoidal Voltages and Currents},
  journal={Electronics}, year={2024}, volume={13}, number={19}, pages={3926},
  doi={10.3390/electronics13193926}
}

@article{Willems1972I,
  author={Willems, Jan C.},
  title={Dissipative Dynamical Systems Part {I}: General Theory},
  journal={Archive for Rational Mechanics and Analysis},
  year={1972}, volume={45}, pages={321--351},
  doi={10.1007/BF00276493}
}

@article{Willems1972II,
  author={Willems, Jan C.},
  title={Dissipative Dynamical Systems Part {II}: Linear Systems with Quadratic Supply Rates},
  journal={Archive for Rational Mechanics and Analysis},
  year={1972}, volume={45}, pages={352--393},
  doi={10.1007/BF00276494}
}

@standard{IEEE1459_2025,
  title        = {{IEEE Standard Definitions for the Measurement of Electric Power Quantities Under Sinusoidal, Nonsinusoidal, Balanced, or Unbalanced Conditions}},
  organization = {IEEE},
  number       = {IEEE Std 1459-2025},
  year         = {2025},
  month        = may
}
